\documentclass[
  reprint,
  superscriptaddress,
  amsmath,amssymb,
  aps,
  prl,
  nofootinbib,
  nobibnotes,
  longbibliography
]{revtex4-2}

\usepackage[T1]{fontenc}
\usepackage[utf8]{inputenc}
\usepackage{lmodern}
\usepackage{changes}
\usepackage{amsmath, amssymb, amsthm, bm, mathtools}
\usepackage{mathrsfs}
\usepackage{braket}   
\usepackage{bbm}      
\usepackage{graphicx}
\usepackage{xcolor}
\usepackage[caption=false]{subfig}
\usepackage{booktabs}
\usepackage{multirow}
\usepackage{array}
\usepackage{hyperref}
\usepackage[capitalize]{cleveref}
\usepackage{ulem}
\usepackage{xspace}

\crefname{section}{appendix}{appendices}
\crefname{section}{Appendix}{Appendices}

\newtheorem{theorem}{Theorem}

\theoremstyle{definition}

\newtheorem{remark}{Remark}

\newcommand{\prlsec}[1]{
  \par\addvspace{0.8\baselineskip}
  \noindent\textit{#1.---}\hspace{0.35em}\ignorespaces
}

\begin{document}

\title{Generalized Nonlinear Imaginary-Time Evolution}

\author{Chenyu Shi}
\affiliation{Leiden Institute of Advanced Computer Science, Universiteit Leiden, Leiden, The Netherlands}
\affiliation{$\langle aQa^L\rangle$ Applied Quantum Algorithms, Universiteit Leiden, The Netherlands}

\author{Hao Wang}
\email[Corresponding author: ]{h.wang@liacs.leidenuniv.nl}
\affiliation{Leiden Institute of Advanced Computer Science, Universiteit Leiden, Leiden, The Netherlands}
\affiliation{$\langle aQa^L\rangle$ Applied Quantum Algorithms, Universiteit Leiden, The Netherlands}

\author{Jin-Fu Chen}
\email[Corresponding author: ]{jinfuchen@lorentz.leidenuniv.nl}
\affiliation{$\langle aQa^L\rangle$ Applied Quantum Algorithms, Universiteit Leiden, The Netherlands}
\affiliation{Instituut-Lorentz, Universiteit Leiden, P.O. Box 9506, 2300 RA Leiden, The Netherlands}

\begin{abstract}
Imaginary-time evolution (ITE) is a powerful method for ground-state preparation of a given Hamiltonian. The normalized ITE can be viewed as a gradient flow of the energy expectation value with respect to the Fubini--Study metric. In this work, we propose a generalized nonlinear imaginary-time evolution (NITE) for more general quantum state-preparation tasks. We further present a hardware-efficient variational implementation of NITE and reveal its connection to quantum natural gradient descent. NITE is applied to several subroutine tasks, including variance minimization in variational eigensolvers, probe-state preparation in variational quantum sensing, and excited-state preparation using penalty terms. We prove that NITE achieves a local exponential convergence rate under reasonable assumptions. Our results show that NITE outperforms standard gradient descent and can serve as an efficient optimization method for variational tasks beyond ground-state preparation.
\end{abstract}

\maketitle

\prlsec{Introduction}
Imaginary-time evolution (ITE) is a fundamental tool for ground-state preparation and energy minimization
\cite{mottaDeterminingEigenstatesThermal2020,mcardleVariationalAnsatzbasedQuantum2019}. It monotonically decreases the energy expectation value and converges to the ground state for any initial state with nonzero ground-state overlap.  
Recent works studied its properties, applications, and potential implementations on quantum devices~\cite{mottaDeterminingEigenstatesThermal2020, mcardleVariationalAnsatzbasedQuantum2019, stokesQuantumNaturalGradient2020, nishiImplementationQuantumImaginarytime2021, yuanTheoryVariationalQuantum2019, mcmahonEquatingQuantumImaginary2026, gomesAdaptiveVariationalQuantum2021, yeter-aydenizQuantumImaginarytimeEvolution2022, hastingsImprovingQuantumAlgorithms2014}.

From a geometric viewpoint, standard ITE is the gradient flow of the energy expectation value with respect to the Fubini--Study metric \cite{stokesQuantumNaturalGradient2020}. 
This interpretation connects ITE to quantum natural gradient descent (QNGD), showing its superior optimization performance compared with standard gradient descent~\cite{mcardleVariationalAnsatzbasedQuantum2019,stokesQuantumNaturalGradient2020}. However, standard ITE formulations are essentially restricted to expectation value of Hamiltonians, as in typical variational quantum eigensolvers for ground-state preparation \cite{peruzzoVariationalEigenvalueSolver2014a,mccleanTheoryVariationalHybrid2016,tillyVariationalQuantumEigensolver2022b}. 
Recent variational quantum algorithms target more general objectives, including variance minimization for excited-state preparation~\cite{zhangVariationalQuantumEigensolvers2022,hobdayVarianceMinimizationNuclear2025,chenVariationalQuantumEigensolver2023}, quantum Fisher information maximization in quantum metrology~\cite{meyerVariationalToolboxQuantum2021b,beckeyVariationalQuantumAlgorithm2022,maclellanEndtoendVariationalQuantum2024}, and optimization with penalty terms for spectral information~\cite{,higgottVariationalQuantumComputation2019, jonesVariationalQuantumAlgorithms2019}. A natural question is whether ITE can be extended beyond energy minimization to provide a general optimization method for broader variational problems.

To address this question, we begin with the geometric interpretation of standard ITE as the Fubini--Study gradient flow of an expectation-value objective on the space of quantum states \cite{stokesQuantumNaturalGradient2020}.
This observation immediately suggests a generalization: for any differentiable cost function on pure quantum states, one may define the corresponding Fubini--Study gradient flow.
The gradient flow reduces to standard ITE for expectation-value cost functions, whose corresponding unnormalized evolution is governed by a linear differential equation with a state-independent Hermitian generator. For general cost functions, the effective Hermitian generators become state-dependent, making the resulting evolution nonlinear. We call this evolution dynamics \textit{Nonlinear Imaginary-Time Evolution} (NITE).  We further generalize NITE to \textit{Multi-state NITE} to study the low-energy physics, where the cost function is defined over a set of pure states. In this case, NITE is described by a system of coupled nonlinear differential equations for the simultaneous evolution of multiple states. In both extensions, the evolution proceeds along a direction that monotonically decreases the cost function.

In this work, we develop NITE as a general imaginary-time optimization framework beyond energy minimization.
We derive the single-state and multi-state NITE equations, and show that a variational form of NITE yields QNGD. For multi-state cost functions, the metric tensor used in the update step is the sum of the Fubini--Study metric tensors of all states. We prove monotonic decrease of the target cost and 
local exponential convergence for NITE under reasonable assumptions. We apply NITE to variance minimization, Fisher-information maximization, and penalty-based excited-state preparation, and compare it with standard gradient descent. The results show that NITE converges faster and is more robust to random initialization than standard gradient descent. The convergence rate agrees with the prediction of our theorem.

\begin{figure}
\centering
\includegraphics[width=\linewidth, trim=00mm 25mm 00mm 10mm,clip]{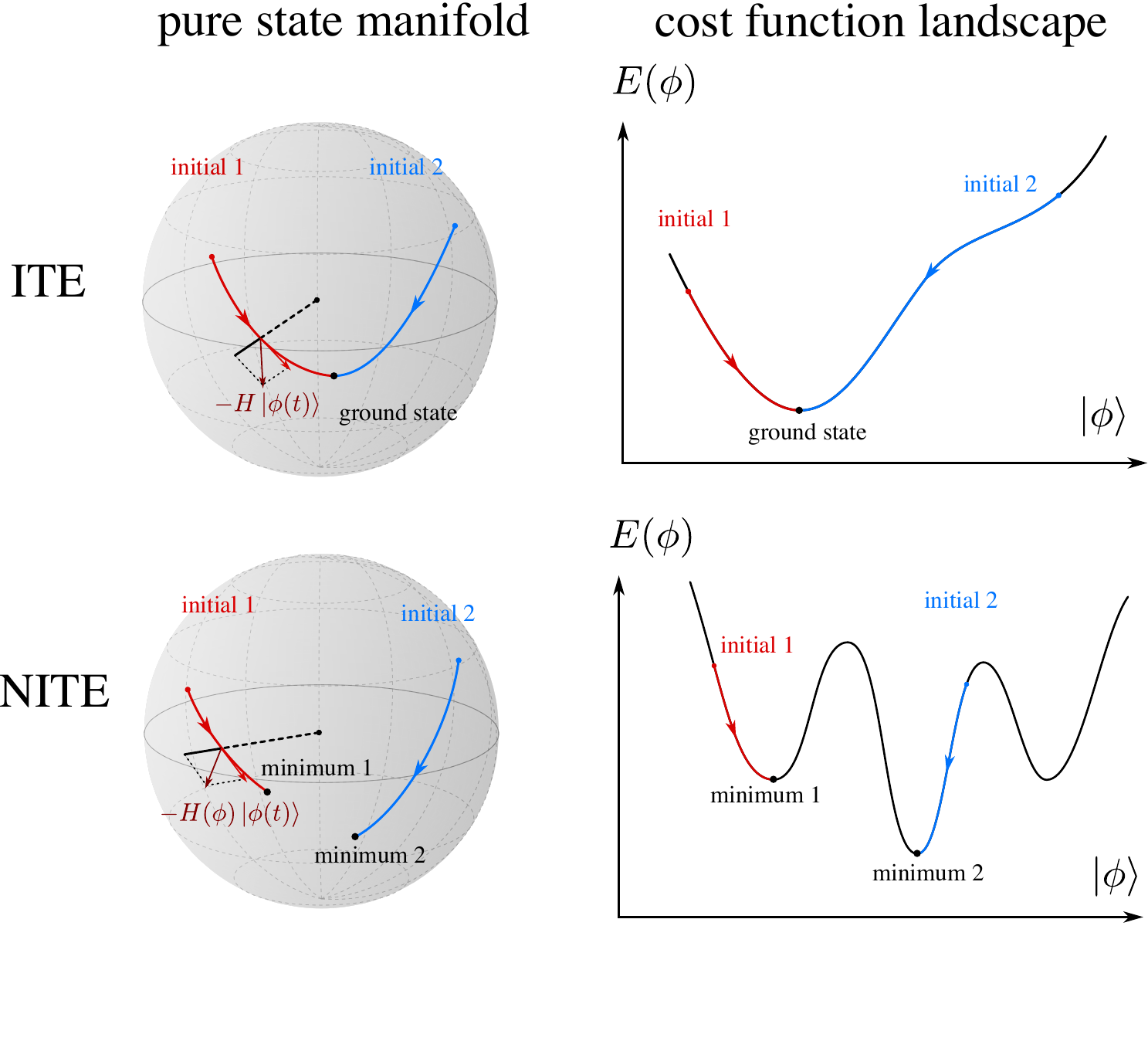}
\caption{Illustrative diagram comparing ITE and NITE. Standard ITE induces an energy-descent flow on the accessible state manifold, driving any initial state with nonzero ground-state overlap toward the ground-state subspace. The update direction is given by the tangent-space projection of $-H\ket{\phi(t)}$. NITE generalizes this picture to state-dependent effective generators associated with general cost functions. Its cost landscape may contain multiple local minima, and the resulting flow can converge to different solutions depending on the initialization. The NITE update direction is also given by the tangent-space projection of $-H(\phi(t))\ket{\phi(t)}$.}
\label{fig:illustrative}
\end{figure}

\prlsec{Imaginary-Time Evolution}
Given an initial pure state $\ket{\phi(0)}$ and a Hamiltonian $H$, ITE is obtained by replacing the real time $t$ in the Schrödinger equation with imaginary time, $t\to -it$:
\begin{equation}
\frac{d}{dt}\ket{\phi(t)}=-H\ket{\phi(t)} .
\label{eq:nonITE}
\end{equation}
Expanding in the energy eigenbasis, we can obtain $\ket{\phi(t)}=\sum_k c_k e^{-E_k t}\ket{E_k}$. If the initial state satisfies $c_0\neq 0$, the normalized evolution converges to the ground state $\ket{E_0}$. The asymptotic convergence is controlled by the spectral gap $E_1-E_0$, with the leading excited-state contribution decaying as $e^{-(E_1-E_0)t}$.

However, \cref{eq:nonITE} does not define a valid evolution of a pure quantum state, since the operator $e^{-Ht}$ is not unitary and therefore does not preserve normalization. To restrict the dynamics to the manifold of normalized pure states, one introduces the normalized form
\begin{equation}
    \frac{d}{dt}\ket{\phi(t)}
    =
    -(H-\braket{H})\ket{\phi(t)},
    \label{eq:norITE}
\end{equation}
where $\braket{H}=\braket{\phi(t)|H|\phi(t)}$ is the energy expectation value. 
Equation~\eqref{eq:norITE} defines an evolution of the normalized state $\ket{\phi(t)}$ that minimizes the energy $\braket{H}$. In particular, Ref. \cite{shiVariationalStudyFermionic2018b} shows that the energy decreases monotonically during the evolution. Consequently, the dynamics always follows a descent direction of the energy functional and converges to the ground state provided that the initial state has nonzero overlap with it.

Though the normalized ITE in \cref{eq:norITE} is nonlinear, the unnormalized ITE in \cref{eq:nonITE} is linear and can be viewed as the counterpart of the linear real-time Schrödinger equation. Hence, we refer to it as linear ITE to distinguish it from the nonlinear generalized framework proposed in this work.

\prlsec{Nonlinear Imaginary-Time Evolution}
A key observation is that ITE in \cref{eq:norITE} is equivalent to a gradient flow with respect to the Fubini--Study metric. The evolution equation for the ket variable is given by
\begin{equation}
    \frac{d}{dt}\ket{\phi(t)}
    = -\nabla^{(\mathrm{FS})} E(\phi),
    \label{eq:ITE}
\end{equation}
where the Fubini--Study gradient $\nabla^{(\mathrm{FS})} E(\phi)$ is obtained by projecting the Wirtinger derivative \cite{kreutz-delgadoComplexGradientOperator2009, koorShortTutorialWirtinger2023a} with respect to the bra variable, i.e., $\frac{\delta E(\phi)}{\delta \bra{\phi}}$, onto the tangent space
\begin{equation}
    \nabla^{(\mathrm{FS})} E(\phi)
    =
    \left(I-\ket{\phi}\bra{\phi}\right)
    \frac{\delta E(\phi)}{\delta \bra{\phi}}.
\end{equation}

When the objective function in \cref{eq:ITE} is the expectation value $E=\bra{\phi}H\ket{\phi}$, the standard ITE \eqref{eq:norITE} is recovered. However, the objective function in \cref{eq:ITE} is not necessarily restricted to the expectation value to define a valid gradient flow. Other functions that define a potential on $\phi$ also define a corresponding gradient flow. 

Denote the gradient with respect to the bra variable by $\ket{g(\phi)}=\nabla E(\phi)=\frac{\delta E(\phi)}{\delta \bra{\phi}}$. One can verify that $\nabla^{(\mathrm{FS})}E(\phi)=P\nabla E(\phi)$, where $P=I-\ket{\phi}\bra{\phi}$ is the projector onto the tangent space at $\ket{\phi}$. Then, we can always define a Hermitian operator $H(\phi)$ determined by $\ket{\phi}$ such that $H(\phi)\ket{\phi}=\ket{g(\phi)}$. Note that the definition of $H(\phi)$ is not unique. To make the definition consistent with linear ITE, we define $H(\phi)$ by the following expression, given the cost function $E(\phi)$ is independent of the global phase of the state $\ket{\phi}$:
\begin{equation}
    H(\phi) = \nabla_{\rho}^{(\mathrm{HS})} \mathcal{E}(\rho)  \Big|_{\rho=\ket{\phi}\bra{\phi}}
    \label{eq:hermitian}
\end{equation}
where $\nabla_{\rho}^{(\mathrm{HS})}$ denotes the gradient with respect to $\rho$ under the Hilbert--Schmidt inner product. Since the cost function $E(\phi)$ is independent of the global phase,
we can always write an equivalent cost function in the density operator form as $\mathcal{E}(\rho)=E(\phi)$, where $\rho=\ket{\phi}\bra{\phi}$.

According to the Riesz representation theorem, $H(\phi)$ defined in \cref{eq:hermitian} is Hermitian.  We also prove that  \cref{eq:hermitian} satisfies $H(\phi)\ket{\phi}=\ket{g(\phi)}$ and reduces to $H$ when the cost function is the expectation value (see \cite{SupplementalMaterial} for details).
From \cref{eq:ITE}, we derive the following equation in the same way by projecting $\ket{g(\phi)}$ onto the tangent space $P\ket{g(\phi)}$:
\begin{equation}
    \frac{d}{dt}\ket{\phi(t)}=-(H(\phi)-\braket{\phi|H(\phi)|\phi})\ket{\phi}\,.
    \label{eq:NITE}
\end{equation}
When the cost function is not an expectation value, the Hamiltonian in \cref{eq:NITE} is state-dependent, rather than a fixed Hermitian operator as in \cref{eq:norITE}. The resulting differential equation is therefore nonlinear in $\phi$. Nevertheless, the optimization can still be implemented through gradient-flow dynamics.

We refer to this method as \textit{Nonlinear Imaginary-Time Evolution} (NITE). Allowing objective functions $E(\phi)$ beyond expectation values substantially extends the applicability of ITE. For instance, one may minimize the variance to suppress fluctuations and drive the state toward an eigenstate, or maximize the quantum Fisher information to target metrologically useful states. Standard ITE is recovered as the special case where the objective function is the energy expectation value. Similar to standard ITE, one can verify that NITE also monotonically decreases $E(\phi)$ from the perspective of gradient flow. 

\prlsec{Multi-state Nonlinear Imaginary-Time Evolution}
The cost function $E(\phi)$ in the previous sections is a functional of a single pure state $\ket{\phi}$.  Here, we generalize it to a functional of multiple pure states $E(\boldsymbol{\phi})$, where $\boldsymbol{\phi}=(\phi_1,\cdots,\phi_K)$. This extension allows one to optimize a cost function defined over a set of quantum states. For example, it provides a natural framework for representing the ground state and a few excited states of a Hamiltonian, thereby describing effective low-energy physics. By constraining the states $\{\ket{\phi_k}\}_{k=1}^K$ to remain orthonormal, one can variationally approximate the lowest $K$ eigenstates.

We define the evolution of the state $\ket{\phi_k}$ as the gradient flow under the Fubini--Study metric as
\begin{equation}
\frac{d}{dt}\ket{\phi_k(t)}=-\nabla_{\bra{\phi_k}}^{(\mathrm{FS})}E(\boldsymbol{\phi})\,.
\label{eq:NITEmulti}
\end{equation}
Instead of the NITE of a single pure state, the gradient flow defines a system of NITE equations of multiple pure states
\begin{equation}
\frac{d}{dt}\ket{\phi_k(t)}=
-(H_k(\boldsymbol{\phi})-\braket{\phi_k|H_k(\boldsymbol{\phi})|\phi_k})\ket{\phi_k}\,.
\label{eq:NNITEmulti}
\end{equation}

The Hermitian operator $H_k(\boldsymbol{\phi})$ is defined in a similar way as in \cref{eq:hermitian}:
\begin{equation}
    H_k(\boldsymbol{\phi}) = \nabla_{\rho_k}^{(\mathrm{HS})} \mathcal{E}(\boldsymbol{\rho})\Big|_{\boldsymbol{\rho}=\boldsymbol{\ket{\phi}}\boldsymbol{\bra{\phi}}}\,,
    \label{eq:hermitianvec}
\end{equation}
where we use the vector notation $\boldsymbol{\ket{\phi}}=(\ket{\phi_1},\ldots,\ket{\phi_K})$, $\boldsymbol{\bra{\phi}}=(\bra{\phi_1},\ldots,\bra{\phi_K})$, $\ket{\boldsymbol{\phi}}\bra{\boldsymbol{\phi}}={(\ket{\phi_1}\bra{\phi_1},\ldots,\ket{\phi_K}\bra{\phi_K})}$ and $\boldsymbol{\rho}=(\rho_1,\ldots,\rho_K)$. Since each $H_k$ is a functional of $\boldsymbol{\phi}$, the system of $K$ differential equations contains coupling terms. One can verify that the multi-state NITE defined in \cref{eq:NNITEmulti} also evolves to monotonically decrease the objective function $E(\boldsymbol{\phi})$.

\prlsec{Convergence Analysis}
It is known that standard ITE exhibits exponential convergence when the initial state has nonzero overlap with the ground state. The objective gap satisfies $E(t)-E_0=\mathcal{O}(e^{-2\Delta t})$, where $\Delta=E_1-E_0$ is the spectral gap of the given Hamiltonian \cite{hartungConvergenceEfficiencyProof2025, benavides-riverosAugmentingImaginaryTimeEvolution2026}. We present the local convergence theorem for NITE below.
\begin{theorem}[Exponential convergence to local minima]
Let $\mathcal{M}_{K}=(\mathbb{CP}^{d-1})^K$ be equipped with the product
Fubini--Study metric, and let $E:\mathcal{M}_{K}\to\mathbb{R}$ be $C^2$.
Suppose that $\boldsymbol{\phi}_{*}$ is a non-degenerate local minimum of
$E$ with $\lambda_*=\lambda_{\min}\left(\operatorname{Hess}_{\mathrm{FS}}E(\boldsymbol{\phi}_{*})\right)>0 $. Then, for any $\mu\in(0,2\lambda_*)$, there exists a neighborhood
$V$ of $\boldsymbol{\phi}_{*}$ such that, for every initial point
$\boldsymbol{\phi}(0)\in V$, the evolution of NITE satisfies, for all $t\ge 0$,
\begin{equation*}
        E(\boldsymbol{\phi}(t))-E(\boldsymbol{\phi}_{*})
    \le
    e^{-\mu t}
    \left(
        E(\boldsymbol{\phi}(0))-E(\boldsymbol{\phi}_{*})
    \right).
\end{equation*}
\vspace{-7mm}
\label{th:linear_converge}
\end{theorem}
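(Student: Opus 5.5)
The plan is the standard Łojasiewicz/Polyak--Łojasiewicz argument for Riemannian gradient flows on the compact manifold $\mathcal{M}_K$. First I identify NITE with the Riemannian gradient flow $\dot{\boldsymbol{\phi}}=-\operatorname{grad}_{\mathrm{FS}}E(\boldsymbol{\phi})$ for the product Fubini--Study metric. By \cref{eq:NITEmulti}, each component evolves by the projected Wirtinger gradient $P_k\,\delta E/\delta\bra{\phi_k}$. This is exactly the horizontal lift of the Riemannian gradient on $\mathbb{CP}^{d-1}$ when the metric is normalized as $\langle u,v\rangle=2\,\mathrm{Re}\braket{u|v}$ on horizontal vectors, and I would state that normalization explicitly, since the factor $2$ in $2\lambda_*$ comes from it. With this identification, along trajectories
\begin{equation*}
\frac{d}{dt}E(\boldsymbol{\phi}(t))=-\|\operatorname{grad}E(\boldsymbol{\phi}(t))\|^2 ,
\end{equation*}
which also gives the monotone decrease claimed earlier.

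The key step is a local gradient-dominance (PL) inequality. Fix $\mu<2\lambda_*$ and choose $\lambda\in(\mu/2,\lambda_*)$. Work in normal coordinates $x=\exp_{\boldsymbol{\phi}_*}^{-1}(\boldsymbol{\phi})$, which are valid on a ball of radius $r$ below the injectivity radius. Since $E\in C^2$, $\operatorname{grad}E(\boldsymbol{\phi}_*)=0$, and the Hessian at $\boldsymbol{\phi}_*$ is $\geq\lambda_*$, continuity gives the following for small $r$:
\begin{equation*}
E(\boldsymbol{\phi})-E(\boldsymbol{\phi}_*)\le \tfrac12\Lambda_+\|x\|^2,\qquad \|\operatorname{grad}E(\boldsymbol{\phi})\|\ge \Lambda_-\|x\| .
\end{equation*}
Here $\Lambda_\pm$ are within $\epsilon$ of the Hessian eigenvalue bounds. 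These estimates come from Taylor expansion of $E\circ\exp$ and of the gradient. Because the metric in normal coordinates equals the identity to first order at the center, the coordinate Hessian and the Riemannian Hessian agree there. The two estimates alone are too crude, since they only yield $\|\operatorname{grad}E\|^2\ge(2\Lambda_-^2/\Lambda_+)(E-E_*)$. A sharper route is needed to reach $2\lambda$. Write $\operatorname{grad}E\approx Ax$ and $E-E_*\approx\tfrac12 x^\top Ax$ with $A=\operatorname{Hess}E(\boldsymbol{\phi}_*)$. Then $\|Ax\|^2\ge\lambda_{\min}(A)\,x^\top Ax$, which gives $\|\operatorname{grad}E\|^2\ge 2\lambda\,(E-E_*)$ after absorbing the $o(\|x\|^2)$ errors, using $x^\top Ax\ge\lambda_*\|x\|^2$. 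This is the main technical step. I would handle it by bounding the remainders $\|\operatorname{grad}E-Ax\|\le\eta\|x\|$ and $|E-E_*-\tfrac12x^\top Ax|\le\eta\|x\|^2$ with $\eta\to0$ as $r\to0$. Here $C^2$ regularity gives exactly these $o(\|x\|)$ and $o(\|x\|^2)$ bounds, and the metric distortion $g_{ij}=\delta_{ij}+O(\|x\|^2)$ is also absorbed.

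Next I make the neighborhood invariant. Let $B=B_r(\boldsymbol{\phi}_*)$, and note that $E-E_*\ge c\|x\|^2$ on $B$ with $c>0$. Take the sublevel-type set $V=\{\boldsymbol{\phi}\in B_{r'}: E(\boldsymbol{\phi})-E_*<\delta\}$, with $\delta< c r^2$ and $r'$ small. Since $E$ is nonincreasing along the flow, a trajectory starting in $V$ cannot reach $\partial B$, because there $E-E_*\ge cr^2>\delta$. So it stays in $B$ for all $t\ge0$, and global existence follows from compactness. Finally, on $B$ we have $\frac{d}{dt}(E-E_*)\le-2\lambda(E-E_*)\le-\mu(E-E_*)$, and Grönwall's inequality yields the stated bound for all $t\ge0$.
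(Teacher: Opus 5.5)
Your proposal is correct and has the same skeleton as the paper's proof. Both use the energy identity $\frac{d}{dt}E(\boldsymbol\phi(t))=-\|\nabla^{(\mathrm{FS})}E\|_{\mathrm{FS}}^2$, a local Polyak--{\L}ojasiewicz inequality with constant $2m$ for some $m\in(\mu/2,\lambda_*)$, a sublevel-type neighborhood $V$ that traps trajectories, and Gr\"onwall.

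The real difference is how the PL inequality is obtained. The paper uses continuity of the Hessian to get $\operatorname{Hess}_{\mathrm{FS}}E\succeq m$ on a small geodesically convex neighborhood $U$, so $E$ is $m$-strongly geodesically convex there. The standard estimate $E(\boldsymbol\phi_*)\ge E(\boldsymbol\phi)+\langle\nabla^{(\mathrm{FS})}E(\boldsymbol\phi),\log_{\boldsymbol\phi}\boldsymbol\phi_*\rangle+\tfrac m2 d(\boldsymbol\phi,\boldsymbol\phi_*)^2\ge E(\boldsymbol\phi)-\tfrac{1}{2m}\|\nabla^{(\mathrm{FS})}E(\boldsymbol\phi)\|^2$ then gives the sharp constant $2m$ immediately. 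This constant does not depend on the condition number of the Hessian, so the ``too crude'' $2\Lambda_-^2/\Lambda_+$ issue you ran into never arises.

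You instead use only the Hessian at $\boldsymbol\phi_*$. You control the $o(\|x\|)$ and $o(\|x\|^2)$ remainders in normal coordinates and apply $A^2\succeq\lambda_{\min}(A)A$. This takes more bookkeeping (metric distortion, absorbing remainders), but it avoids convexity theory on the manifold and shows exactly where $C^2$ regularity enters.

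Your explicit normalization $\langle u,v\rangle=2\operatorname{Re}\braket{u|v}$ on horizontal vectors is also a useful addition. The paper leaves it implicit (``under our normalization''). Yet it is exactly what makes the NITE equation the gradient flow with $\frac{d}{dt}E=-\|\nabla^{(\mathrm{FS})}E\|_{\mathrm{FS}}^2$, and what gives $\lambda_*=E_1-E_0$ in the ITE remark.
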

The proof of the theorem is provided in \cite{SupplementalMaterial}. This theorem shows that, for a twice continuously differentiable cost function, NITE converges exponentially to a non-degenerate local minimum whose Fubini--Study Hessian is strictly positive definite. Note that the convergence rate of standard ITE can be viewed as a special case of the NITE result. For standard ITE, the positive-definiteness requirement in the theorem reduces to the spectral-gap condition $\Delta>0$, where the minimum eigenvalue $\lambda_*$ of the Hessian is exactly the energy gap $\Delta$. Details of the remark are provided in \cite{SupplementalMaterial}. This shows that the convergence rate factor in the theorem is consistent with the standard ITE case.

\prlsec{Generalization of QNGD}
Variational quantum circuits provide a hardware-efficient approach for near-term quantum devices, which parameterizes a submanifold of $\mathbb{CP}^{n-1}$ by a set of parameters $\theta$. QNGD \cite{stokesQuantumNaturalGradient2020} is a numerical optimization method for adjusting the parameters $\theta$ in variational quantum circuits. QNGD uses the Fubini--Study metric in the pure-state space instead of the Euclidean geometry in parameter space, leading to the following update rule:
\begin{equation}
    \Delta \theta = -\eta G_{\theta}^{-1}\nabla f(\phi(\theta))\,,
    \label{eq:qngd}
\end{equation}
where $[G_{\theta}]_{ij}=\operatorname{Re}[\braket{\partial_{\theta_i}\phi|\partial_{\theta_j}\phi}-\braket{\partial_{\theta_i}\phi|\phi}\braket{\phi|\partial_{\theta_j}\phi}]$ is the Fubini--Study metric tensor and $\eta$ is the learning rate.

Previous works have observed that, compared to standard gradient descent, QNGD can accelerate convergence \cite{stokesQuantumNaturalGradient2020, koczorQuantumNaturalGradient2022} and avoid local minima \cite{wierichsAvoidingLocalMinima2020}. It has also been proven that QNGD is equivalent to the variational form of ITE for ground-state preparation \cite{stokesQuantumNaturalGradient2020}. Although the definition of QNGD in \cite{stokesQuantumNaturalGradient2020} does not require the cost function to be an expectation value, its established equivalence to variational ITE has so far mainly been studied in energy minimization.
We prove in \cite{SupplementalMaterial} that QNGD with general cost functions is the variational form of the single-state NITE defined in \cref{eq:NITE}.

Moreover, for the multi-state NITE defined in \cref{eq:NNITEmulti}, we prove in \cite{SupplementalMaterial} that the generalized QNGD follows the update rule
\begin{equation}
\Delta \theta = -\eta \bigg(\sum_k G_k(\theta)\bigg)^{-1}\nabla_{\theta} f(\phi(\theta)),
\label{eq:qngdmulti}
\end{equation}
where $G_k(\theta)$ is the Fubini--Study metric tensor of each pure state $\ket{\phi_k}$ in $E(\boldsymbol{\phi})$. This update rule can be viewed as a special case of WA-QNGD \cite{shiWeightedApproximateQuantum2026}, where the system consists of product states.

NITE provides a theoretical explanation for QNGD with general cost functions from the perspective of state evolution. Conversely, QNGD in \cref{eq:qngd} provides a variational formulation of NITE that is better suited to near-term quantum devices. In the following examples, we examine the performance of NITE based on this variational formulation in several instances to demonstrate its potential applications.

\prlsec{Variational Quantum Eigensolvers}
Eigenstate preparation is an important problem in quantum chemistry and many-body nuclear systems \cite{zhangVariationalQuantumEigensolvers2022, hobdayVarianceMinimizationNuclear2025}. In related methods \cite{zhangVariationalQuantumEigensolvers2022, hobdayVarianceMinimizationNuclear2025, chenVariationalQuantumEigensolver2023}, a subroutine of variational quantum eigensolvers requires minimizing the variance
\begin{equation}
    E(\phi)=\braket{\phi|H^2|\phi}-(\braket{\phi|H|\phi})^2\,.
    \label{eq:variance}
\end{equation}
Here, we consider this cost function as an example to examine NITE and compare it with standard gradient descent. We also implement NITE in its variational form using a parameterized quantum circuit according to \cref{eq:qngd}. Following the problem setting in \cite{zhangVariationalQuantumEigensolvers2022}, we evaluate our method on a 6-qubit $H_3^{+}$ Hamiltonian from quantum chemistry. The specific form of the Hamiltonian is taken from the PennyLane Molecular Library \cite{Utkarsh2023Chemistry}. The ansatz uses an 8-layer EfficientSU2 circuit. Since our goal is to assess the performance of NITE as an optimization method, we focus primarily on its convergence speed from a random initial point to an eigenstate.

\begin{figure}
\centering
\includegraphics[width=0.99\linewidth, trim=1mm 1mm 0mm 0mm, clip]{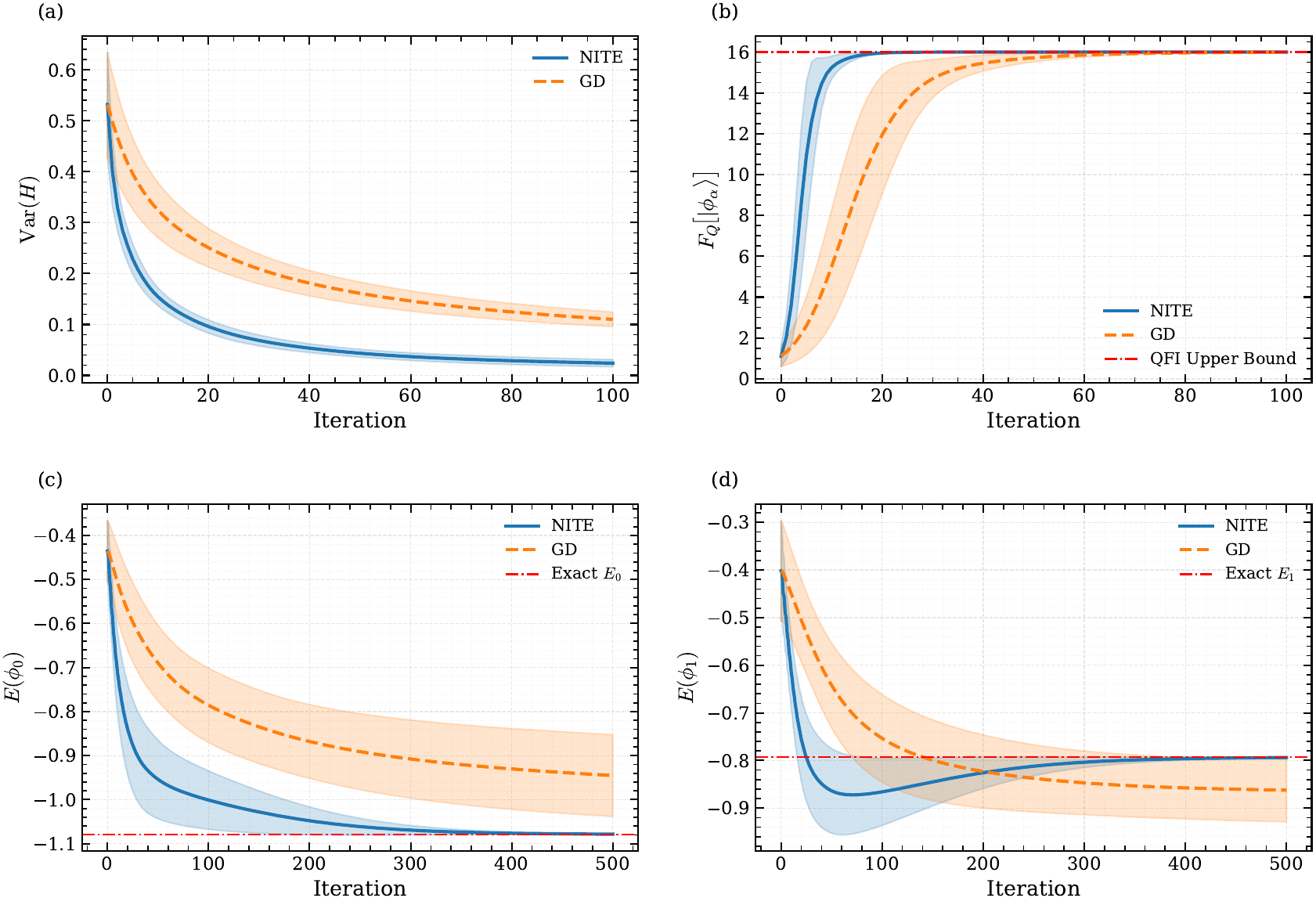}
\vspace{-5mm}
\caption{
Comparison of NITE and standard gradient descent learning curves in three subroutine tasks. 
(a) Variance minimization.
(b) QFI maximization, with the red dash-dotted line indicating the theoretical QFI upper bound.
(c,d) Simultaneous preparation of the ground state and first excited state, where the two panels show the energies of the prepared states and the red dash-dotted lines indicate the corresponding exact eigen energies.
Solid blue curves denote NITE, dashed orange curves denote GD, and shaded regions show one standard deviation over $10$ independent runs.
}
\label{fig:variance}
\end{figure}

The learning curves are shown in \cref{fig:variance}. The numerical simulations are conducted 10 times with different initial points. As we can see, the evolution defined by NITE converges faster to a minimum than standard gradient descent, indicating that NITE has advantages for the variance minimization subroutine. Hence, using NITE in related tasks could benefit the corresponding algorithms.

Due to the limited expressivity of the ansatz, the learning curve of NITE still does not converge to exactly zero, although it performs better than gradient descent. To avoid the limitation of the ansatz, we conduct additional numerical experiments in \cite{SupplementalMaterial} using full-space parameterization to ensure that the solution is contained within the search space. 

\prlsec{Probe State Preparation for Quantum Metrology}
Variational quantum sensing is a technique used for quantum metrology and parameter estimation \cite{meyerVariationalToolboxQuantum2021b,beckeyVariationalQuantumAlgorithm2022,maclellanEndtoendVariationalQuantum2024}. A subroutine in variational quantum sensing is preparing the probe state $\phi$ that maximizes the Fisher information with the parameter $\alpha$ to be estimated. Here we consider this subroutine with the optimal measurement strategy. Namely, the objective to be maximized is the quantum Fisher information $F_Q$ of $\alpha$, which is an upper bound on the Fisher information over all possible measurements.

We consider a single-parameter phase-sensing problem, where the unknown parameter $\alpha$ is encoded into a probe state $\ket{\phi}$ through a unitary evolution generated by a Hermitian operator $H$. For the encoded pure state $\ket{\phi_\alpha}$, the quantum Fisher information (QFI) is defined as
\begin{equation}
\!\!\!\!\!\!\!F_Q[\ket{\phi_\alpha}]
=
4\left(
\langle\partial_{\alpha}\phi_{\alpha}|\partial_{\alpha}\phi_{\alpha}\rangle
-
\langle\phi_{\alpha}|\partial_{\alpha}\phi_{\alpha}\rangle
\langle\partial_{\alpha}\phi_{\alpha}|\phi_{\alpha}\rangle
\right).
\end{equation}
For unitary parameter encoding generated by $\ket{\phi_\alpha }=e^{-i\alpha H}\ket{\phi}$, this reduces to $    F_Q[\ket{\phi_\alpha}]
    =
    4[
    \braket{\phi|H^2|\phi}
    -
    \braket{\phi|H|\phi}^2
    ]$. Therefore, optimizing the probe state $\ket{\phi}$ to maximize the QFI is equivalent to minimizing
\begin{equation}
    E(\phi)
    =
    -F_Q[\ket{\phi_\alpha}].
\end{equation}

In our numerical example, we follow the single-parameter phase-sensing setting of Ref.~\cite{maclellanEndtoendVariationalQuantum2024}, taking $H=\sum_{i=1}^n Z_i$ as the collective generator of a uniform $z$-axis phase rotation. The resulting QFI quantifies the sensitivity of the variational probe state to this collective phase encoding.
\label{eq:sensing}
We compare NITE with gradient descent in a $4$-qubit instance of this subroutine task, and the QFI upper bound in this case is $16$. The NITE is also implemented in its variational form using a parameterized quantum circuit as \cref{eq:qngd}. The ansatz uses a $2$-layer EffcientSU2 circuit. The numerics are conducted 10 times with different initializations. As shown in \cref{fig:variance}, NITE still exhibits faster convergence than gradient descent. 
Additional numerical experiments using full-space parameterization are provided in \cite{SupplementalMaterial} to avoid the expressibility limitation of the ansatz.

\prlsec{Preparing Excited State with Penalty Terms}
Penalty terms are a standard tool for targeting excited states \cite{higgottVariationalQuantumComputation2019,jonesVariationalQuantumAlgorithms2019}. Here we consider the cost function for preparing the first excited state and the ground state simultaneously:
\begin{equation}
\!\!\!\!\! E(\phi_0,\phi_1)\!=\!\lambda_0\!\braket{\phi_0|H|\phi_0} \!+\! \lambda_1 \!\braket{\phi_1|H|\phi_1}+\!\lambda_p\! \left| \braket{\phi_0|\phi_1} \right|^2.
\label{eq:pfcs}
\end{equation}

According to \cref{eq:NITEmulti} and \cref{eq:NNITEmulti}, the above cost function leads to two NITE dynamics for the states $\ket{\phi_0}$ and $\ket{\phi_1}$, respectively. By choosing appropriate $\lambda_0$, $\lambda_1$, and $\lambda_p$, the two states are expected to converge to the ground state and the first excited state. The cost function can be generalized to the first $K$ eigenstates of the Hamiltonian $H$
\begin{equation}
E(\boldsymbol{\phi})=\sum_{k=0}^{K-1}\lambda_{k}\bra{\phi_{k}}H\ket{\phi_{k}}+\!\!\!\!\!\!\sum_{0\le i<j\le K-1}\!\!\!\!\lambda_{ij}\bigl|\braket{\phi_{i}|\phi_{j}}\bigr|^{2},
\label{eq:pfcs-general}
\end{equation}
where the coefficients satisfy $\lambda_k>0$ and $\lambda_{ij}>0$. Assuming that $H$ has a non-degenerate spectrum, there exist choices of the coefficients $\lambda_k$ and $\lambda_{ij}$ such that the global minimum of \cref{eq:pfcs-general} is achieved by the first $K$ eigenstates of $H$.

Here, we consider the cost function \cref{eq:pfcs} as an instance to investigate the performance of NITE and compare it with standard gradient descent for preparing the ground and first excited states simultaneously. We also implement NITE in its variational form using a parameterized quantum circuit with the update rule described in \cref{eq:qngdmulti}. We evaluate our approach on a 4-qubit $H_2$ Hamiltonian from quantum chemistry. The explicit form of the Hamiltonian is also obtained from the PennyLane Molecular Library.

The energy curves of $\ket{\phi_0}$ and $\ket{\phi_1}$ are shown in \cref{fig:variance}. The numerical results are averaged over 10 independent runs. During the training process, the NITE energy curves of $\ket{\phi_0}$ and $\ket{\phi_1}$ converge to the ground-state energy and the first-excited-state energy, respectively. In contrast, gradient descent fails to converge to the desired states. This result indicates that NITE not only achieves a faster convergence speed, but also improves the quality of convergence, successfully identifying the ground state and the first excited state. Moreover, the learning curves of NITE exhibit smaller variance compared with those of gradient descent, suggesting that NITE is more robust to initialization.
To avoid the influence of limited expressility of the ansatz, we also conduct the numerics with full-space parameterization in \cite{SupplementalMaterial}.

\prlsec{Conclusion and Outlook}
In this work, we introduced nonlinear imaginary-time evolution (NITE) as a generalization of standard imaginary-time evolution to objectives beyond the expectation value of a fixed Hamiltonian. The resulting effective Hermitian generator is state-dependent, leading to a nonlinear differential equation. We also formulated a multi-state extension, where the objective is jointly determined by several states and the dynamics is described by coupled nonlinear differential equations. This provides a variational route to target a low-energy subspace containing the ground state and nearby excited states.

NITE is not restricted to specific parameterization methods, such as variational quantum circuits, and can be extended to other variational approaches, including matrix product states~\cite{schollwockDensitymatrixRenormalizationGroup2011c}, non-Gaussian variational states~\cite{shiVariationalStudyFermionic2018b}, and neural quantum states~\cite{carleoSolvingQuantumManybody2017a}. 
Moreover, the same framework could also be generalized to finite-temperature variational manifolds \cite{shiVariationalApproachManyBody2020, luVariationalNeuralTensor2025}, where the free-energy functional contains an entropic contribution and therefore defines a nonlinear cost function on the purified representation of thermal states.

We also emphasize that NITE is not limited to physical optimization problems. In variational-ansatz-based quantum machine learning \cite{mitaraiQuantumCircuitLearning2018}, the cost function is often not a single expectation value, but a nonlinear composite of expectation values, such as least-square loss or cross-entropy loss in variational quantum classifiers \cite{schuldCircuitcentricQuantumClassifiers2020, mariTransferLearningHybrid2020}. In such settings, NITE may provide a useful geometry-aware optimization method for improving the training process.

\prlsec{Acknowledgements}
This project was funded by the European Union (ERC CoG, BeMAIQuantum, 101124342). This project was also co-funded by the Dutch National Growth Fund (NGF), as part of the Quantum Delta NL programme. J.F.C. also
acknowledges the support received from the
European Union’s Horizon Europe research and
innovation programme through the ERC StG
FINE-TEA-SQUAD (Grant No. 101040729).
The views and opinions expressed here are
solely those of the authors and do not necessarily
reflect those of the funding institutions. None of
the funding institutions can be held responsible
for them.

\bibliographystyle{unsrt}  
\bibliography{NITE,refadded}  

\begin{thebibliography}{10}

\bibitem{mottaDeterminingEigenstatesThermal2020}
Mario Motta, Chong Sun, Adrian T.~K. Tan, Matthew~J. O’Rourke, Erika Ye, Austin~J. Minnich, Fernando G. S.~L. Brandão, and Garnet Kin-Lic Chan.
\newblock Determining eigenstates and thermal states on a quantum computer using quantum imaginary time evolution.
\newblock {\em Nature Physics}, 16(2):205--210, February 2020.

\bibitem{mcardleVariationalAnsatzbasedQuantum2019}
Sam McArdle, Tyson Jones, Suguru Endo, Ying Li, Simon~C. Benjamin, and Xiao Yuan.
\newblock Variational ansatz-based quantum simulation of imaginary time evolution.
\newblock {\em npj Quantum Information}, 5(1):75, September 2019.

\bibitem{stokesQuantumNaturalGradient2020}
James Stokes, Josh Izaac, Nathan Killoran, and Giuseppe Carleo.
\newblock Quantum {Natural} {Gradient}.
\newblock {\em Quantum}, 4:269, May 2020.

\bibitem{nishiImplementationQuantumImaginarytime2021}
Hirofumi Nishi, Taichi Kosugi, and Yu-ichiro Matsushita.
\newblock Implementation of quantum imaginary-time evolution method on {NISQ} devices by introducing nonlocal approximation.
\newblock {\em npj Quantum Information}, 7(1):85, June 2021.

\bibitem{yuanTheoryVariationalQuantum2019}
Xiao Yuan, Suguru Endo, Qi~Zhao, Ying Li, and Simon~C. Benjamin.
\newblock Theory of variational quantum simulation.
\newblock {\em Quantum}, 3:191, October 2019.

\bibitem{mcmahonEquatingQuantumImaginary2026}
Nathan~A. McMahon, Mahum Pervez, and Christian Arenz.
\newblock Equating quantum imaginary time evolution, {Riemannian} gradient flows, and stochastic implementations.
\newblock {\em Physical Review Research}, 8(2):023024, April 2026.

\bibitem{gomesAdaptiveVariationalQuantum2021}
Niladri Gomes, Anirban Mukherjee, Feng Zhang, Thomas Iadecola, Cai‐Zhuang Wang, Kai‐Ming Ho, Peter~P. Orth, and Yong‐Xin Yao.
\newblock Adaptive {Variational} {Quantum} {Imaginary} {Time} {Evolution} {Approach} for {Ground} {State} {Preparation}.
\newblock {\em Advanced Quantum Technologies}, 4(12):2100114, December 2021.

\bibitem{yeter-aydenizQuantumImaginarytimeEvolution2022}
Kübra Yeter-Aydeniz, Eleftherios Moschandreou, and George Siopsis.
\newblock Quantum imaginary-time evolution algorithm for quantum field theories with continuous variables.
\newblock {\em Physical Review A}, 105(1):012412, January 2022.

\bibitem{hastingsImprovingQuantumAlgorithms2014}
M.~B. Hastings, D.~Wecker, B.~Bauer, and M.~Troyer.
\newblock Improving {Quantum} {Algorithms} for {Quantum} {Chemistry}.
\newblock 2014.
\newblock Version Number: 2.

\bibitem{peruzzoVariationalEigenvalueSolver2014a}
Alberto Peruzzo, Jarrod McClean, Peter Shadbolt, Man-Hong Yung, Xiao-Qi Zhou, Peter~J. Love, Alán Aspuru-Guzik, and Jeremy~L. O’Brien.
\newblock A variational eigenvalue solver on a photonic quantum processor.
\newblock {\em Nature Communications}, 5(1):4213, July 2014.

\bibitem{mccleanTheoryVariationalHybrid2016}
Jarrod~R McClean, Jonathan Romero, Ryan Babbush, and Alán Aspuru-Guzik.
\newblock The theory of variational hybrid quantum-classical algorithms.
\newblock {\em New Journal of Physics}, 18(2):023023, February 2016.

\bibitem{tillyVariationalQuantumEigensolver2022b}
Jules Tilly, Hongxiang Chen, Shuxiang Cao, Dario Picozzi, Kanav Setia, Ying Li, Edward Grant, Leonard Wossnig, Ivan Rungger, George~H. Booth, and Jonathan Tennyson.
\newblock The {Variational} {Quantum} {Eigensolver}: {A} review of methods and best practices.
\newblock {\em Physics Reports}, 986:1--128, November 2022.

\bibitem{zhangVariationalQuantumEigensolvers2022}
Dan-Bo Zhang, Bin-Lin Chen, Zhan-Hao Yuan, and Tao Yin.
\newblock Variational quantum eigensolvers by variance minimization.
\newblock {\em Chinese Physics B}, 31(12):120301, November 2022.

\bibitem{hobdayVarianceMinimizationNuclear2025}
I.~Hobday, P.~D. Stevenson, and J.~Benstead.
\newblock Variance minimization for nuclear structure on a quantum computer.
\newblock {\em Physical Review C}, 111(6):064321, June 2025.

\bibitem{chenVariationalQuantumEigensolver2023}
Bin-Lin Chen and Dan-Bo Zhang.
\newblock Variational {Quantum} {Eigensolver} with {Mutual} {Variance}-{Hamiltonian} {Optimization}.
\newblock {\em Chinese Physics Letters}, 40(1):010303, January 2023.

\bibitem{meyerVariationalToolboxQuantum2021b}
Johannes~Jakob Meyer, Johannes Borregaard, and Jens Eisert.
\newblock A variational toolbox for quantum multi-parameter estimation.
\newblock {\em npj Quantum Information}, 7(1):89, June 2021.

\bibitem{beckeyVariationalQuantumAlgorithm2022}
Jacob~L. Beckey, M.~Cerezo, Akira Sone, and Patrick~J. Coles.
\newblock Variational quantum algorithm for estimating the quantum {Fisher} information.
\newblock {\em Physical Review Research}, 4(1):013083, February 2022.

\bibitem{maclellanEndtoendVariationalQuantum2024}
Benjamin MacLellan, Piotr Roztocki, Stefanie Czischek, and Roger~G. Melko.
\newblock End-to-end variational quantum sensing.
\newblock {\em npj Quantum Information}, 10(1):118, November 2024.

\bibitem{higgottVariationalQuantumComputation2019}
Oscar Higgott, Daochen Wang, and Stephen Brierley.
\newblock Variational {Quantum} {Computation} of {Excited} {States}.
\newblock {\em Quantum}, 3:156, July 2019.

\bibitem{jonesVariationalQuantumAlgorithms2019}
Tyson Jones, Suguru Endo, Sam McArdle, Xiao Yuan, and Simon~C. Benjamin.
\newblock Variational quantum algorithms for discovering {Hamiltonian} spectra.
\newblock {\em Physical Review A}, 99(6):062304, June 2019.

\bibitem{shiVariationalStudyFermionic2018b}
Tao Shi, Eugene Demler, and J.~Ignacio~Cirac.
\newblock Variational study of fermionic and bosonic systems with non-{Gaussian} states: {Theory} and applications.
\newblock {\em Annals of Physics}, 390:245--302, March 2018.

\bibitem{kreutz-delgadoComplexGradientOperator2009}
Ken Kreutz-Delgado.
\newblock The {Complex} {Gradient} {Operator} and the {CR}-{Calculus}, 2009.
\newblock Version Number: 1.

\bibitem{koorShortTutorialWirtinger2023a}
Kelvin Koor, Yixian Qiu, Leong~Chuan Kwek, and Patrick Rebentrost.
\newblock A short tutorial on {Wirtinger} {Calculus} with applications in quantum information, 2023.
\newblock Version Number: 1.

\bibitem{SupplementalMaterial}
Supplemental {Material}.

\bibitem{hartungConvergenceEfficiencyProof2025}
Tobias Hartung and Karl Jansen.
\newblock Convergence and efficiency proof of quantum imaginary time evolution for bounded order systems, 2025.
\newblock Version Number: 2.

\bibitem{benavides-riverosAugmentingImaginaryTimeEvolution2026}
Carlos~L. Benavides-Riveros, Prachi Sharma, and Fedor Šimkovic.
\newblock Augmenting {Imaginary}-{Time} {Evolution} with {Local} {Geometric} {Information}, 2026.
\newblock Version Number: 1.

\bibitem{koczorQuantumNaturalGradient2022}
Bálint Koczor and Simon~C. Benjamin.
\newblock Quantum natural gradient generalized to noisy and nonunitary circuits.
\newblock {\em Physical Review A}, 106(6):062416, December 2022.

\bibitem{wierichsAvoidingLocalMinima2020}
David Wierichs, Christian Gogolin, and Michael Kastoryano.
\newblock Avoiding local minima in variational quantum eigensolvers with the natural gradient optimizer.
\newblock {\em Physical Review Research}, 2(4):043246, November 2020.

\bibitem{shiWeightedApproximateQuantum2026}
Chenyu Shi, Vedran Dunjko, and Hao Wang.
\newblock Weighted approximate quantum natural gradient for variational quantum eigensolver.
\newblock {\em Quantum Science and Technology}, 11(1):015060, March 2026.

\bibitem{Utkarsh2023Chemistry}
Utkarsh Azad and Stepan Fomichev.
\newblock Pennylane quantum chemistry datasets.
\newblock \url{https://pennylane.ai/datasets/h3-plus-molecule}, 2023.

\bibitem{schollwockDensitymatrixRenormalizationGroup2011c}
Ulrich Schollwöck.
\newblock The density-matrix renormalization group in the age of matrix product states.
\newblock {\em Annals of Physics}, 326(1):96--192, January 2011.

\bibitem{carleoSolvingQuantumManybody2017a}
Giuseppe Carleo and Matthias Troyer.
\newblock Solving the quantum many-body problem with artificial neural networks.
\newblock {\em Science}, 355(6325):602--606, February 2017.

\bibitem{shiVariationalApproachManyBody2020}
Tao Shi, Eugene Demler, and J.~Ignacio Cirac.
\newblock Variational {Approach} for {Many}-{Body} {Systems} at {Finite} {Temperature}.
\newblock {\em Physical Review Letters}, 125(18):180602, October 2020.

\bibitem{luVariationalNeuralTensor2025}
Sirui Lu, Giacomo Giudice, and J.~Ignacio Cirac.
\newblock Variational neural and tensor network approximations of thermal states.
\newblock {\em Physical Review B}, 111(7):075102, February 2025.

\bibitem{mitaraiQuantumCircuitLearning2018}
K.~Mitarai, M.~Negoro, M.~Kitagawa, and K.~Fujii.
\newblock Quantum circuit learning.
\newblock {\em Physical Review A}, 98(3):032309, September 2018.

\bibitem{schuldCircuitcentricQuantumClassifiers2020}
Maria Schuld, Alex Bocharov, Krysta~M. Svore, and Nathan Wiebe.
\newblock Circuit-centric quantum classifiers.
\newblock {\em Physical Review A}, 101(3):032308, March 2020.

\bibitem{mariTransferLearningHybrid2020}
Andrea Mari, Thomas~R. Bromley, Josh Izaac, Maria Schuld, and Nathan Killoran.
\newblock Transfer learning in hybrid classical-quantum neural networks.
\newblock {\em Quantum}, 4:340, October 2020.

\end{thebibliography}

\clearpage
\newpage
\onecolumngrid

\setcounter{secnumdepth}{2}
\setcounter{section}{0}
\setcounter{subsection}{0}
\setcounter{equation}{0}
\setcounter{figure}{0}
\setcounter{table}{0}
\crefname{equation}{Eq.}{Eqs.}
\setcounter{section}{0}
\crefname{figure}{Fig.}{Figs.}
\crefname{section}{Appendix}{Appendices}

\renewcommand{\thefigure}{S\arabic{figure}}
\renewcommand{\theequation}{S\arabic{equation}}

\begin{center} 
    {\large \bf Supplemental Material: Generalized Nonlinear Imaginary-Time Evolution}
    \\[0.8em]

    {\normalsize
    Chenyu Shi$^{1,2}$,
    Hao Wang$^{1,2,*}$,
    and Jin-Fu Chen$^{2,3,\dagger}$
    }
    \\[0.8em]

    {\footnotesize
    $^{1}$Leiden Institute of Advanced Computer Science, Leiden University, Leiden, The Netherlands\\
    $^{2}$ $\langle aQa^L\rangle$ Applied Quantum Algorithms, Universiteit Leiden, The Netherlands\\
    $^{3}$Instituut-Lorentz, Universiteit Leiden, P.O. Box 9506, 2300 RA Leiden, The Netherlands
    }
\end{center}

\begingroup
\makeatletter
\long\def\@makefntext#1{\parindent=0pt\noindent #1}
\makeatother
\renewcommand{\thefootnote}{}
\footnotetext{
\begin{tabular}{@{}r@{\ }l@{}}
$^{*}$ & Corresponding author: h.wang@liacs.leidenuniv.nl\\
$^{\dagger}$ & Corresponding author: jinfuchen@lorentz.leidenuniv.nl
\end{tabular}
}
\endgroup

\section{State-dependent Hermitian Operator $H(\phi)$} \label{sec:H}

Suppose $\rho$ represents an $n$-qubit quantum system. Denote $N=2^n$ and $\mathcal{H}_{N}=\{X\in\mathbb{C}^{N\times N}:X=X^{\dagger}\}$ as the space of $N\times N$ Hermitian matrices, equipped with the Hilbert--Schmidt inner product $\langle X,Y\rangle_{HS}=\mathrm{Tr}(XY)$.

The cost function $\mathcal{E}:\mathcal{H}_N \rightarrow \mathbb{R}$ defines a function from the Hermitian matrix space to real values. The Hilbert--Schmidt gradient $\nabla_{\rho}^{(HS)}\mathcal{E}(\rho)$ is correspondingly defined as the unique element that satisfies:
\begin{equation}
    D\mathcal{E}(\rho)[X]=\mathrm{Tr}(\nabla_{\rho}^{(HS)}\mathcal{E}(\rho)\,X), \ \ \ \ \forall X \in \mathcal{H}_N.
    \label{eq:gradienth}
\end{equation}

We first prove that $\nabla_{\rho}^{(HS)}\mathcal{E}(\rho)$ is Hermitian. Since the Hermitian matrix space $\mathcal{H}_N$ is a finite-dimensional real vector space, the differential $D\mathcal{E}(\rho)$ defines a linear functional $X\mapsto D\mathcal{E}(\rho)[X]$. According to the Riesz representation theorem, there exists a unique $G\in \mathcal{H}_N$ whose inner product with $X$ satisfies:
\begin{equation}
    D\mathcal{E}(\rho)[X]=\mathrm{Tr}(G X), \ \ \ \ \forall X \in \mathcal{H}_N.
    \label{eq:reize}
\end{equation}
According to the definition in \cref{eq:gradienth}, this $G$ is the gradient $\nabla_{\rho}^{(HS)}\mathcal{E}(\rho)$. Hence, we conclude that $H(\phi)$ is always Hermitian.

Next we prove $H(\phi)\ket{\phi}=\ket{g(\phi)}$. Because $\rho(\phi)=\ket{\phi}\bra{\phi}$, we have $\delta_{\bra{\phi}} \rho(\phi) = \ket{\phi}\,\delta\bra{\phi}$. The chain rule gives
\begin{align}
    \frac{\delta}{\delta\bra{\phi}}E(\phi) &= \delta_{\bra{\phi}} \mathcal{E}(\rho(\phi)) \notag \\
    &= \frac{\delta \mathcal{E}(\rho)}{\delta \rho} \frac{\delta\rho}{\delta\bra{\phi}} \notag \\
    &= H(\phi)\ket{\phi},
\end{align}
where in the last equality we use $\nabla_{\rho}^{(\mathrm{HS})}\mathcal{E}(\rho)=\frac{\delta \mathcal{E}(\rho)}{\delta\rho}$. Hence, we have proven that $H(\phi)\ket{\phi}=\ket{g(\phi)}$.

In the case where the cost function is $E(\phi)=\braket{\phi|H|\phi}$, namely $\mathcal{E}(\rho)=\operatorname{Tr}(H\rho)$, we prove that $H(\phi)=H$, i.e., NITE reduces to ITE:
\begin{align}
    H(\phi) &= \nabla_{\rho}^{(\mathrm{HS})} \mathcal{E}(\rho)\Big|_{\rho=\ket{\phi}\bra{\phi}} \notag \\
    &= \frac{\delta \operatorname{Tr}(H\rho)}{\delta\rho} \notag \\
    &= H.
\end{align}

Hence, we have proven that the definition in \cref{eq:hermitian} is consistent with ITE.

\section{Local Convergence Rate}\label{sec:POT}
In this section, we first provide the proof of \cref{th:linear_converge}.
\begin{proof}
Fix $\mu\in(0,2\lambda_*)$, and choose $m$ such that
\begin{equation}
    \frac{\mu}{2}<m<\lambda_* .
\end{equation}
By continuity of the Fubini--Study Hessian, after shrinking to a sufficiently
small geodesically convex neighborhood $U$ of $\boldsymbol{\phi}_*$, we have
\begin{equation}
    \operatorname{Hess}_{\mathrm{FS}}E(\boldsymbol{\phi})[\xi,\xi]
    \ge
    m\|\xi\|_{\mathrm{FS}}^{2},
    \qquad
    \boldsymbol{\phi}\in U,\quad
    \xi\in T_{\boldsymbol{\phi}}\mathcal{M}_{K}.
\end{equation}
Hence $E$ is $m$-strongly geodesically convex on $U$. Since
$\boldsymbol{\phi}_*$ is a critical point, this implies the local
Polyak--{\L}ojasiewicz inequality
\begin{equation}
    \left\|
        \nabla^{(\mathrm{FS})}E(\boldsymbol{\phi})
    \right\|_{\mathrm{FS}}^{2}
    \ge
    2m
    \left(
        E(\boldsymbol{\phi})-E(\boldsymbol{\phi}_*)
    \right),
    \qquad
    \boldsymbol{\phi}\in U .
\label{eq:normEE}
\end{equation}

Choose a smaller neighborhood $V\subset U$ of $\boldsymbol{\phi}_*$ such
that every NITE trajectory starting from $V$ remains in $U$. This can be
done by taking $V$ to be a sufficiently small sublevel neighborhood of
$E$, since $E$ decreases along the NITE flow and
$\boldsymbol{\phi}_*$ is a strict local minimum.

Along the NITE flow,
\begin{equation}
    \frac{d}{dt}E(\boldsymbol{\phi}(t))
    =
    -
    \left\|
        \nabla^{(\mathrm{FS})}E(\boldsymbol{\phi}(t))
    \right\|_{\mathrm{FS}}^{2}.
    \label{eq:dtEE}
\end{equation}
Therefore, for every initial condition $\boldsymbol{\phi}(0)\in V$, substituting \cref{eq:normEE} into \cref{eq:dtEE}, we obtain:
\begin{equation}
    \frac{d}{dt}
    \left(
        E(\boldsymbol{\phi}(t))-E(\boldsymbol{\phi}_*)
    \right)
    \le
    -2m
    \left(
        E(\boldsymbol{\phi}(t))-E(\boldsymbol{\phi}_*)
    \right).
\end{equation}
By Gronwall's inequality,
\begin{equation}
    E(\boldsymbol{\phi}(t))-E(\boldsymbol{\phi}_*)
    \le
    e^{-2mt}
    \left(
        E(\boldsymbol{\phi}(0))-E(\boldsymbol{\phi}_*)
    \right).
\end{equation}
Since $2m>\mu$, the desired estimate follows.
\end{proof}

Then we provide the details of remarks. 
\begin{remark}
For the standard imaginary-time evolution with the expectation-value cost,
the positive-definiteness assumption in \cref{th:linear_converge} reduces
to the usual non-degenerate spectral gap condition.
\end{remark}

\begin{proof}
Consider the single-state case $K=1$, so that
$\mathcal M_1=\mathbb{CP}^{d-1}$, and let
\begin{equation}
    E(\psi)=\langle \psi,H\psi\rangle ,
    \qquad \|\psi\|=1 .
\end{equation}
Let
\begin{equation}
    H\ket{u_j}=E_j\ket{u_j},
    \qquad
    E_0<E_1\le E_2\le\cdots ,
\end{equation}
and suppose that the ground state $\ket{u_0}$ is non-degenerate. Around
$\ket{u_0}$, use the affine chart
\begin{equation}
    \ket{\psi(z)}
    =
    \frac{
        \ket{u_0}+\sum_{j\ge1}z_j\ket{u_j}
    }{
        \left(1+\sum_{j\ge1}|z_j|^2\right)^{1/2}
    } .
\end{equation}
Then
\begin{equation}
    E(\psi(z))
    =
    \frac{
        E_0+\sum_{j\ge1}E_j|z_j|^2
    }{
        1+\sum_{j\ge1}|z_j|^2
    }
    =
    E_0
    +
    \sum_{j\ge1}(E_j-E_0)|z_j|^2
    +
    O(\|z\|^4).
\end{equation}
Thus the quadratic part of $E-E_0$ in the Fubini--Study normal
coordinates is diagonal, with eigenvalues
$    E_j-E_0$, $j\ge1,$
under our normalization of the Fubini--Study Hessian. Therefore, we obtain
\begin{equation}
    \lambda_{\min}
    \left(
        \operatorname{Hess}_{\mathrm{FS}}E(u_0)
    \right)
    =
    E_1-E_0 .
\end{equation}
Hence the Hessian is positive definite if and only if the ground state is
non-degenerate and separated from the rest of the spectrum by a positive
gap. In this case, the constant $\lambda_*$ in
\cref{th:linear_converge} is exactly the spectral gap
\begin{equation}
    \lambda_* = E_1-E_0 .
\end{equation}
\end{proof}

We next formulate the theorem for the general case where the optimum is not a single point but a submanifold.

\begin{theorem}[Exponential convergence to optimal submanifold]
Let $\mathcal M_K=(\mathbb{CP}^{d-1})^K$ be equipped with the product
Fubini--Study metric, and let $E:\mathcal M_K\to\mathbb R$ be $C^2$.
Suppose that $\boldsymbol\phi_*$ belongs to a smooth local manifold
$\mathcal N$ of minimizers of $E$, and that $\ker\operatorname{Hess}_{\mathrm{FS}}E(\boldsymbol\phi_*)=T_{\boldsymbol\phi_*}\mathcal N$. Let $\lambda_*=\lambda_{\min}\left(\operatorname{Hess}_{\mathrm{FS}}E(\boldsymbol\phi_*)\big|_{(T_{\boldsymbol\phi_*}\mathcal N)^\perp}\right)>0$.  Then, for any $\mu\in(0,2\lambda_*)$, there exists a neighborhood $V$ of $\boldsymbol\phi_*$ such that, for every initial point $\boldsymbol\phi(0)\in V$ whose NITE trajectory converges to $\phi_*$, the evolution satisfies, for all $t\geq0$,
\begin{equation*}
    E(\boldsymbol\phi(t))-E(\boldsymbol\phi_*)
    \le
    e^{-\mu t}
    \left(
        E(\boldsymbol\phi(0))-E(\boldsymbol\phi_*)
    \right),    
\end{equation*}

\end{theorem}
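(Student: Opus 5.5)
The plan is to mimic the proof of \cref{th:linear_converge}, replacing strong convexity by a Morse--Bott type Polyak--{\L}ojasiewicz inequality transverse to $\mathcal N$. Fix $\mu\in(0,2\lambda_*)$ and choose $m$ with $\mu/2<m<\lambda_*$. The goal is a neighborhood $U$ of $\boldsymbol\phi_*$ on which
\begin{equation*}
    \left\|\nabla^{(\mathrm{FS})}E(\boldsymbol\phi)\right\|_{\mathrm{FS}}^2\ge 2m\left(E(\boldsymbol\phi)-E(\boldsymbol\phi_*)\right),\qquad \boldsymbol\phi\in U .
\end{equation*}
Once this holds along the trajectory, the energy identity $\frac{d}{dt}E=-\|\nabla^{(\mathrm{FS})}E\|_{\mathrm{FS}}^2$ and Gronwall's inequality give the decay $e^{-2mt}\le e^{-\mu t}$ exactly as before. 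Note that $E$ is constant on $\mathcal N$ near $\boldsymbol\phi_*$, since $\mathcal N$ is a connected local manifold of minimizers, so $E(\boldsymbol\phi_*)$ is the local minimum value $E_{\min}$ of $E$ near $\mathcal N$.

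To prove the inequality, I would use tubular-neighborhood (normal) coordinates around $\mathcal N$. For $\boldsymbol\phi$ near $\boldsymbol\phi_*$, let $p=\pi(\boldsymbol\phi)\in\mathcal N$ be its nearest-point projection and write $\boldsymbol\phi=\exp_p(v)$ with $v\in(T_p\mathcal N)^\perp$. The kernel condition holds at $\boldsymbol\phi_*$; at nearby $p\in\mathcal N$ one still has $T_p\mathcal N\subseteq\ker\operatorname{Hess}E(p)$, because $\nabla E\equiv0$ on $\mathcal N$. By continuity, the restriction of the Hessian to the normal bundle has eigenvalues at least $m'$ for any $m'<\lambda_*$, uniformly for $p$ near $\boldsymbol\phi_*$.

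Taylor expanding along the normal geodesic $s\mapsto\exp_p(sv)$, with $\nabla E(p)=0$ and $E\in C^2$, gives
\begin{equation*}
    E(\boldsymbol\phi)-E_{\min}=\tfrac12\operatorname{Hess}E(p)[v,v]+o(|v|^2),
\end{equation*}
and likewise the normal component of $\nabla E(\boldsymbol\phi)$, transported to $p$, equals $\operatorname{Hess}E(p)v+o(|v|)$. Writing $A=\operatorname{Hess}E(p)|_{(T_p\mathcal N)^\perp}$, we get $\|\nabla E\|^2\ge|Av|^2(1-o(1))$ and $|Av|^2\ge m'\langle v,Av\rangle$. Combining these gives $\|\nabla E\|^2\ge 2m'(1-o(1))(E-E_{\min})$. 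Choosing $m<m'<\lambda_*$ and shrinking $U$ yields the inequality with constant $2m$.

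The main obstacle is ensuring the trajectory stays in $U$. Sublevel sets are no longer small near a non-isolated minimum, since $E-E_{\min}$ vanishes along all of $\mathcal N$. The statement sidesteps this by assuming the trajectory converges to $\boldsymbol\phi_*$, but that alone does not give containment for all $t\ge0$. I would handle it by a {\L}ojasiewicz-type length bound: on $U$, with $f=E-E_{\min}$,
\begin{equation*}
    \frac{d}{dt}\sqrt{f}=-\frac{\|\nabla E\|^2}{2\sqrt{f}}\le-\sqrt{m/2}\,\|\nabla E\| .
\end{equation*}
Hence the arc length of the trajectory while it remains in $U$ is at most $\sqrt{2/m}\,\sqrt{f(\boldsymbol\phi(0))}$, which is small once $V$ is small, because $f$ is continuous and vanishes at $\boldsymbol\phi_*$. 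Taking $U=B(\boldsymbol\phi_*,r)$ and $V=B(\boldsymbol\phi_*,r/3)\cap\{f<mr^2/18\}$, a first-exit-time argument shows the trajectory never leaves $U$. This also justifies the convergence hypothesis rather than merely using it, and the Gronwall step then completes the proof.
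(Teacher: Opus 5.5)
Your proof is correct and follows the paper's skeleton: a transverse (Morse--Bott) Polyak--{\L}ojasiewicz inequality with constant $2m$ near $\boldsymbol\phi_*$, then the energy identity $\frac{d}{dt}E=-\|\nabla^{(\mathrm{FS})}E\|_{\mathrm{FS}}^2$, then Gronwall. For the inequality you expand at the foot point $p=\pi(\boldsymbol\phi)$ using $\operatorname{Hess}E(p)$, which kills $T_p\mathcal N$ and therefore preserves the normal space. The paper instead expands in adapted coordinates $(u,v)$ with the Hessian frozen at $\boldsymbol\phi_*$. The two are equivalent; yours just makes the uniformity in the tangential variable explicit.

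The real difference is the containment step. The paper chooses $V$ by appealing to the decrease of $E$ and to $\boldsymbol\phi_*$ being a minimizer modulo gauge, together with the hypothesis that the trajectory converges to $\boldsymbol\phi_*$. As you point out, this justification does not suffice. Every sublevel set $\{E<E(\boldsymbol\phi_*)+\epsilon\}$ contains the local piece of $\mathcal N$ regardless of $\epsilon$, so energy decrease alone cannot prevent drift along $\mathcal N$ out of $U$. Eventual convergence to $\boldsymbol\phi_*$ also does not exclude a temporary excursion outside $U$, where the PL inequality is unavailable.

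Your bound $\frac{d}{dt}\sqrt{f}\le-\sqrt{m/2}\,\|\dot{\boldsymbol\phi}\|_{\mathrm{FS}}$, combined with the first-exit-time argument, fills this gap; the choices $r/3$ and $mr^2/18$ balance correctly. The division by $\sqrt f$ is harmless: $f>0$ off $\mathcal N$ within $U$, points of $\mathcal N$ are equilibria, and the $C^1$ gradient field has unique solutions.

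The payoff is a stronger theorem. The estimate holds for every $\boldsymbol\phi(0)\in V$, and finite arc length forces convergence to some point of $\mathcal N$ with energy $E(\boldsymbol\phi_*)$, not necessarily $\boldsymbol\phi_*$ itself. So your remark about ``justifying'' the convergence hypothesis is best read as showing that the hypothesis can be dropped.
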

\begin{proof}
Fix $\mu\in(0,2\lambda_*)$, and choose $m$ such that
\begin{equation}
    \frac{\mu}{2}<m<\lambda_* .
\end{equation}
Since
\begin{equation}
    \ker\operatorname{Hess}_{\mathrm{FS}}E(\boldsymbol\phi_*)
    =
    T_{\boldsymbol\phi_*}\mathcal N
\end{equation}
and
\begin{equation}
    \lambda_*
    =
    \lambda_{\min}
    \left(
    \operatorname{Hess}_{\mathrm{FS}}E(\boldsymbol\phi_*)
    \big|_{(T_{\boldsymbol\phi_*}\mathcal N)^\perp}
    \right)>0,
\end{equation}
the Hessian is positive definite in directions normal to
$\mathcal N$ at $\boldsymbol\phi_*$. By continuity of the
Fubini--Study Hessian, after shrinking to a sufficiently small tubular
neighbourhood $U$ of $\mathcal N$ near $\boldsymbol\phi_*$, the
normal Hessian remains bounded below by $m$.

Equivalently, in local coordinates $(u,v)$ adapted to $\mathcal N$,
where $u$ is tangent to $\mathcal N$ and $v$ is normal to
$\mathcal N$, we have
\begin{equation}
    E(u,v)-E(\boldsymbol\phi_*)
    =
    \frac12
    \operatorname{Hess}_{\mathrm{FS}}E(\boldsymbol\phi_*)[v,v]
    +
    o(\|v\|_{\mathrm{FS}}^{2}),
\end{equation}
and
\begin{equation}
    \nabla^{(\mathrm{FS})}E(u,v)
    =
    \operatorname{Hess}_{\mathrm{FS}}E(\boldsymbol\phi_*)v
    +
    o(\|v\|_{\mathrm{FS}}).
\end{equation}
Therefore, after possibly shrinking $U$, $E$ satisfies the local
Polyak--{\L}ojasiewicz inequality modulo gauge:
\begin{equation}
    \left\|
        \nabla^{(\mathrm{FS})}E(\boldsymbol\phi)
    \right\|_{\mathrm{FS}}^{2}
    \ge
    2m
    \left(
        E(\boldsymbol\phi)-E(\boldsymbol\phi_*)
    \right),
    \qquad
    \boldsymbol\phi\in U .
\end{equation}

Choose a smaller neighborhood $V\subset U$ of $\boldsymbol\phi_*$
such that every NITE trajectory starting from $V$ and converging to
$\boldsymbol\phi_*$ remains in $U$. This can be done by taking $V$
sufficiently small, since $E$ decreases along the NITE flow and
$\boldsymbol\phi_*$ is a local minimizer modulo the gauge directions.

Along the NITE flow,
\begin{equation}
    \frac{d}{dt}E(\boldsymbol\phi(t))
    =
    -
    \left\|
        \nabla^{(\mathrm{FS})}E(\boldsymbol\phi(t))
    \right\|_{\mathrm{FS}}^{2}.
\end{equation}
Hence, for every initial condition $\boldsymbol\phi(0)\in V$ whose
trajectory converges to $\boldsymbol\phi_*$,
\begin{equation}
    \frac{d}{dt}
    \left(
        E(\boldsymbol\phi(t))-E(\boldsymbol\phi_*)
    \right)
    \le
    -2m
    \left(
        E(\boldsymbol\phi(t))-E(\boldsymbol\phi_*)
    \right).
\end{equation}
Gronwall's inequality gives
\begin{equation}
    E(\boldsymbol\phi(t))-E(\boldsymbol\phi_*)
    \le
    e^{-2mt}
    \left(
        E(\boldsymbol\phi(0))-E(\boldsymbol\phi_*)
    \right).
\end{equation}
Since $2m>\mu$, the desired estimate follows.
\end{proof}

\section{NITE of a single state to QNGD}\label{sec:NITE2QNGD}
For the variational state $\ket{\phi(\theta)}$, \cref{eq:NITE} takes the following form by the chain rule
\begin{equation}
    \frac{\partial \ket{\phi}}{\partial \theta_i}\dot{\theta_i}
    = -(1-\ket{\phi}\bra{\phi})H(\phi)\ket{\phi}.
    \label{eq:varnite1}
\end{equation}
The variational principle then leads to the following least-squares problem
\begin{align}
    \dot{\theta_i}
    &= \mathop{\arg\min}\limits_{\dot{\theta_i}}
    \left\|
    \frac{\partial \ket{\phi}}{\partial \theta_i}\dot{\theta_i}
    +(1-\ket{\phi}\bra{\phi})H(\phi)\ket{\phi}
    \right\|_{\mathrm{FS}}^{2} \notag \\
    &= \mathop{\arg\min}\limits_{\dot{\theta_i}}
    \left\|
    (1-\ket{\phi}\bra{\phi})\frac{\partial \ket{\phi}}{\partial \theta_i}\dot{\theta_i}
    +(1-\ket{\phi}\bra{\phi})H(\phi)\ket{\phi}
    \right\|_{2}^{2} .
\end{align}
Expanding the above least-squares problem, we obtain an expression of the form $\dot{\theta}^TA\dot{\theta}+b\dot{\theta}+c$, where the matrix $A$ is

\begin{align}
    A_{ij}  &=\bra{\partial_{\theta_i}\phi}(1-\ket{\phi}\bra{\phi})^2\ket{\partial_{\theta_j}\phi} \notag \\
    &=\braket{\partial_{\theta_i}\phi|\partial_{\theta_j}\phi}-\braket{\partial_{\theta_i}\phi|\phi}\braket{\phi|\partial_{\theta_j}\phi}.
\end{align}

Note that $\sum_{ij}A_{ij}\dot{\theta}_i\dot{\theta}_j$ is a real number, so using $\operatorname{Re}[A]$ or $A$ in the summation gives the same result: $\sum_{ij}A_{ij}\dot{\theta}_i\dot{\theta}_j=\sum_{ij}\operatorname{Re}[A]_{ij}\dot{\theta}_i\dot{\theta}_j$. Hence, we recover the Fubini--Study metric tensor used in QNGD.

For the term $b$, we have
\begin{align}
    b_i &= \bra{\partial_{\theta_i}\phi}(1-\ket{\phi}\bra{\phi})H(\phi)\ket{\phi}+\bra{\phi}H(\phi) (1-\ket{\phi}\bra{\phi})\ket{\partial_{\theta_i}\phi} \notag \\
    &=\bra{\partial_{\theta_i}\phi}H(\phi)\ket{\phi}+\bra{\phi}H(\phi)\ket{\partial_{\theta_i}\phi}-\braket{\phi|H(\phi)|\phi}(\braket{\partial_{\theta_i}\phi|\phi}+\braket{\phi|\partial_{\theta_i}\phi}) \notag \\
    &=\bra{\partial_{\theta_i}\phi}H(\phi)\ket{\phi}+\bra{\phi}H(\phi)\ket{\partial_{\theta_i}\phi}.
    \label{eq:bterm}
\end{align}
Taking the derivative of the cost function with respect to the parameters $\theta_i$, we have
\begin{align}
    \frac{\partial}{\partial \theta_i}E(\phi(\theta)) &= \frac{\partial \bra{\phi}}{\partial \theta_i} \frac{\partial E(\phi)}{\partial \bra{\phi}}+\frac{\partial E(\phi)}{\partial \ket{\phi}}\frac{\partial \ket{\phi}}{\partial \theta_i} \notag \\
    &= \bra{\partial_{\theta_i}\phi}H(\phi)\ket{\phi}+\bra{\phi}H(\phi)\ket{\partial_{\theta_i}\phi} \notag \\
    &= b_i.
\end{align}
Hence, we have shown that $b=\nabla_{\theta} E(\phi(\theta))$. By the KKT conditions, we have $\dot{\theta}=-A^{-1}b$. Using a discrete time step as the learning rate, we exactly recover the QNGD update step in \cref{eq:qngd}.

\section{NITE of multiple states to generalized QNGD} \label{sec:NITEmulti2QNGD}
For each variational state $\ket{\phi_k(\theta)}$, \cref{eq:NNITEmulti} takes the following form by the chain rule:
\begin{equation}
    \frac{\partial \ket{\phi_k}}{\partial \theta_i}\dot{\theta_i}
    = -(1-\ket{\phi_k}\bra{\phi_k})H_k(\boldsymbol{\phi})\ket{\phi_k}.
    \label{eq:varnite2}
\end{equation}
The variational principle then leads to the following least-squares problem for the $K$ differential equations:
\begin{align}
\dot{\theta_i}
&= \mathop{\arg\min}\limits_{\dot{\theta_i}}
\sum_{k=1}^{K}
\left\|
(1-\ket{\phi_k}\bra{\phi_k})\frac{\partial \ket{\phi_k}}{\partial \theta_i}\dot{\theta_i}
+(1-\ket{\phi_k}\bra{\phi_k})H(\boldsymbol{\phi})\ket{\phi_k}
\right\|_{2}^{2}.
\end{align}

Similar to the previous section, the above least-squares problem can be expanded in the form $\dot{\theta}^T A \dot{\theta} + b \dot{\theta} + c$. The matrix $A$ is now given by:
\begin{align}
A_{ij}&=\sum_k \bra{\partial_{\theta_i}\phi_k}(1-\ket{\phi_k}\bra{\phi_k})^2\ket{\partial_{\theta_j}\phi_k} \notag \\
&=\sum_k\braket{\partial_{\theta_i}\phi_k|\partial_{\theta_j}\phi_k}-\braket{\partial_{\theta_i}\phi_k|\phi_k}\braket{\phi_k|\partial_{\theta_j}\phi_k} \notag \\
&= \sum_k [G_k]_{ij}.
\end{align}
Compared to the NITE of a single state, the matrix $A=\sum_k G_k$ for the NITE of multiple states is the sum of the Fubini--Study metric tensors of all pure states in $\boldsymbol{\phi}$.

For the term $b$, according to the proof in \cref{eq:bterm}, we similarly have $b_i=\sum_k \Big(\bra{\partial_{\theta_i}\phi_k}H(\boldsymbol{\phi})\ket{\phi_k}+\bra{\phi_k}H(\boldsymbol{\phi})\ket{\partial_{\theta_i}\phi_k}\Big)$. Taking the derivative of the cost function with respect to the parameters $\theta_i$, we have
\begin{align}
    \frac{\partial}{\partial \theta_i}E(\boldsymbol{\phi}(\theta)) &= \sum_k \frac{\partial \bra{\phi_k}}{\partial \theta_i} \frac{\partial E(\boldsymbol{\phi})}{\partial \bra{\phi_k}}+\frac{\partial E(\boldsymbol{\phi})}{\partial \ket{\phi_k}}\frac{\partial \ket{\phi_k}}{\partial \theta_i} \notag \\
    &= \sum_k \bra{\partial_{\theta_i}\phi_k}H(\boldsymbol{\phi})\ket{\phi_k}+\bra{\phi_k}H(\boldsymbol{\phi})\ket{\partial_{\theta_i}\phi_k} \notag \\
    &= \sum_k [b_k]_i.
\end{align}
Therefore, we have shown that $b=\nabla_{\theta} E(\phi(\theta))$. By the KKT conditions, we have $\dot{\theta}=-A^{-1}b$. Using a discrete time step as the learning rate, we exactly recover the QNGD update step in \cref{eq:qngdmulti} of the main text.

\section{Supplement Numerics} \label{sec:SN}
To avoid the limitations of the ansatz, in this section we present supplementary numerical results obtained using full-space parameterization. The full-space parameterization directly treats the state vector as a $2^n$-dimensional complex vector. Note that this full-space parameterization requires $2^{n+1}$ real parameters; hence, it is not efficiently implementable in practice compared to the hardware-efficient ansatz used in the main text. The results shown here are intended solely to show that the limitation in convergence arises from the expressivity of the ansatz, rather than from the method itself. In addition, the RK45 method is used to solve the differential equations directly, instead of the QNGD approximation as a first-order discretization. For a fair comparison, we also use RK45 to solve the differential equation corresponding to gradient descent.

\cref{fig:variance_log} shows the learning curves for minimizing the variance. The two methods perform similarly at the initial stage. The gradient descent finally stucks around $10^{-3}$ precision, while NITE still exhibits a linear convergence rate. 

\begin{figure}
\centering
\includegraphics[width=0.4\linewidth]{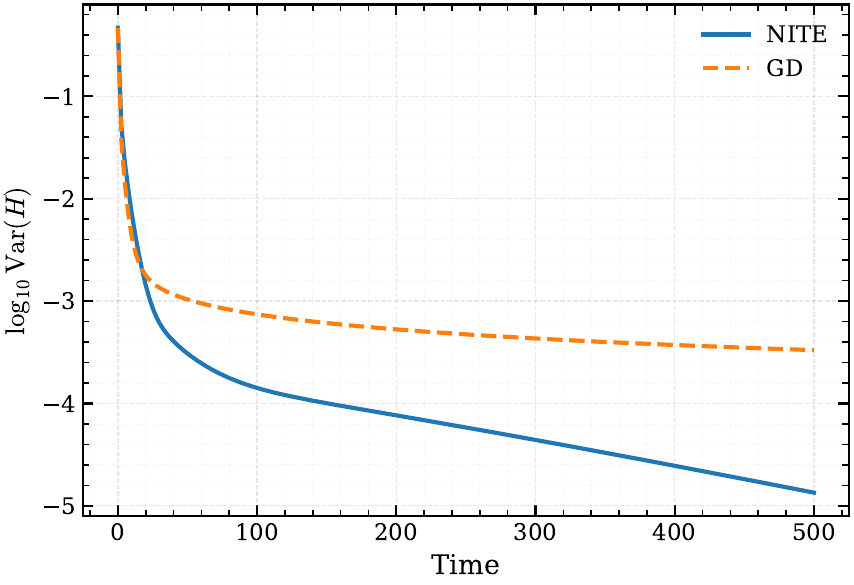}
\caption{Learning curves for minimizing variance on a logarithmic scale with full-space parameterization.}
\label{fig:variance_log}
\end{figure}

\begin{figure}
\centering
\includegraphics[width=0.4\linewidth]{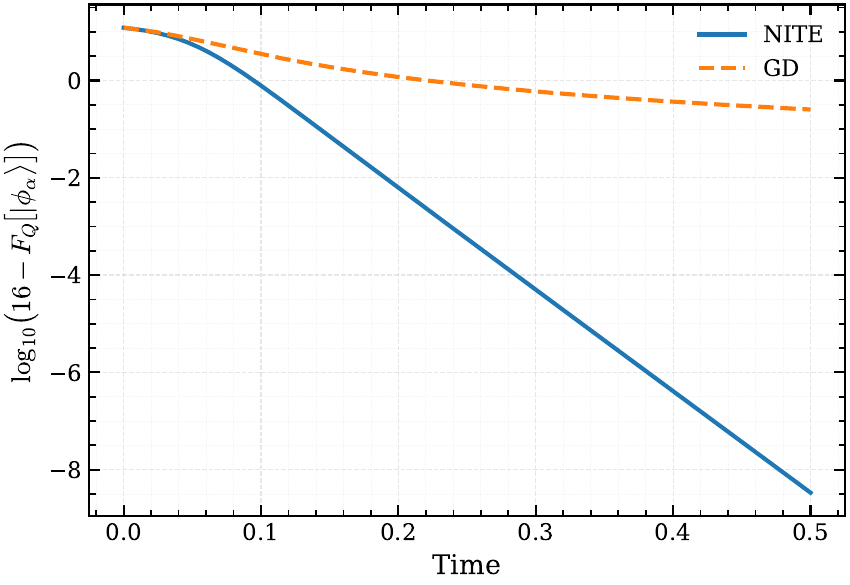}
\caption{Learning curves for maximizing QFI on a logarithmic scale with full-space parameterization.}
\label{fig:qfilog}
\end{figure}

\begin{figure}[htbp]
    \centering
    \subfloat[]{
        \includegraphics[width=0.47\linewidth]{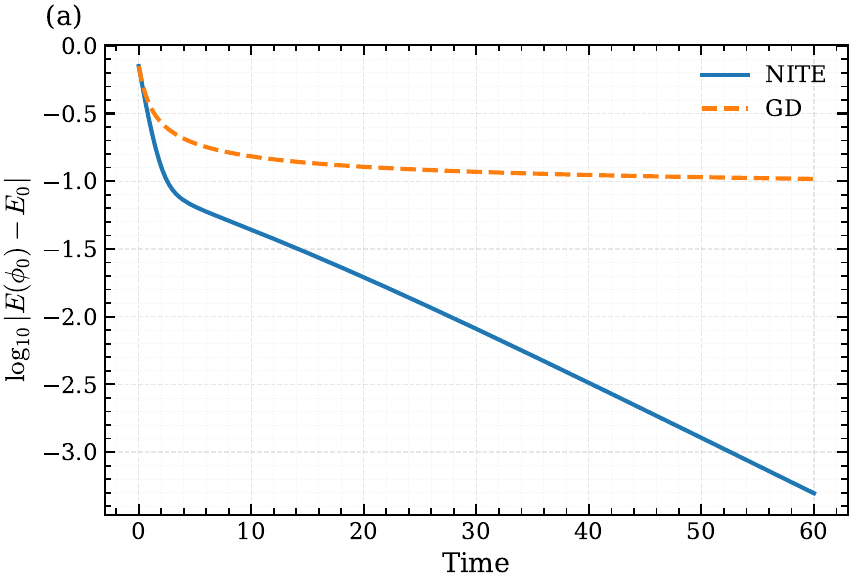}
        \label{fig:fig11}
    }
    \hfill
    \subfloat[]{
        \includegraphics[width=0.47\linewidth]{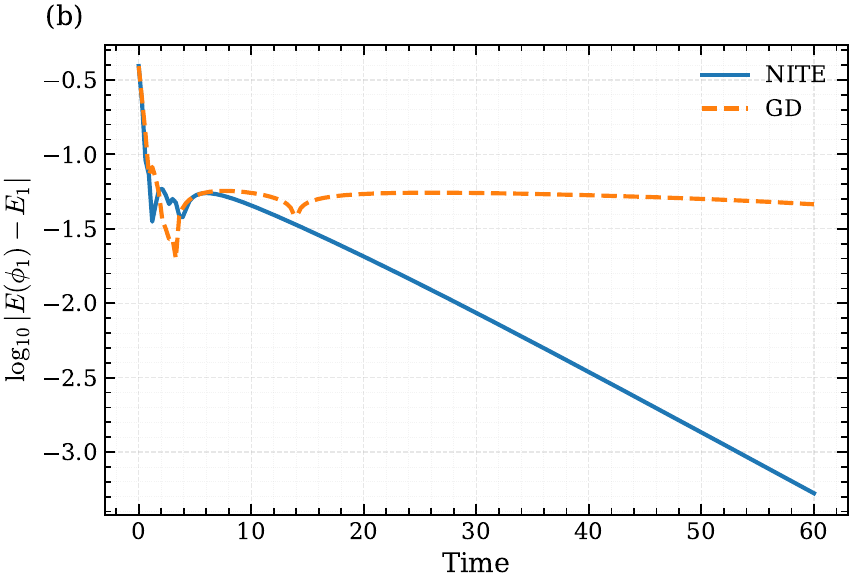}
        \label{fig:fig22}
    }
    \caption{
    Log-scale learning curves for preparing (a) the ground state and
    (b) the first excited state.
    }
    \label{fig:excited_log}
\end{figure}

\cref{fig:qfilog} shows the learning curves for preparing the probe state to maximize the QFI. NITE achieves a linear convergence rate, while gradient descent exhibits slow sublinear convergence behavior as the optimization progresses.

\cref{fig:excited_log} presents the learning curves for preparing the ground and first excited states. The left panel corresponds to the ground state, and the right panel to the first excited state. For both states, gradient descent stalls at limited precision, while NITE continues to exhibit a linear convergence rate.

The supplementary numerical results in this section present a performance comparison in the regime where the state parameterization is sufficiently expressive, thereby avoiding limitations due to ansatz expressibility and enabling a clearer comparison between the two approaches as optimization methods. The results indicate that NITE has advantages over gradient descent in all subroutine tasks.

\begin{figure}
\centering
\includegraphics[width=0.94\linewidth]{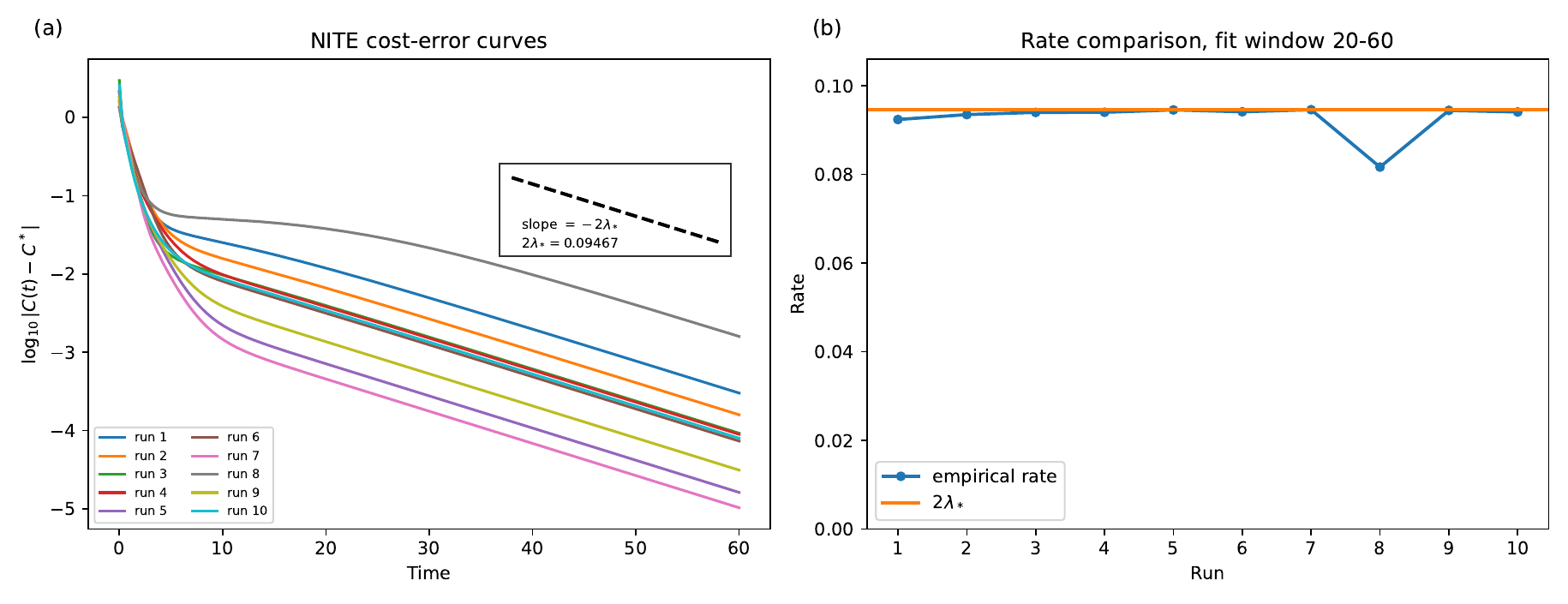}
\caption{
Comparison between the numerical NITE convergence rate and the Hessian-predicted local rate.
(a) Logarithmic cost-error trajectories for ten independent NITE runs. The dashed black line indicates the theoretical local slope predicted by $2\lambda_*$.
(b) Numerical convergence rates extracted from a linear fit of $\log |C(t)-C^*|$ over the fitting window $t\in[20,60]$, compared with the Hessian prediction $2\lambda_*$.
}
\label{fig:raterate}
\end{figure}

\end{document}